\newcommand{\ie}{\textit{i.e.},~}
\newcommand{\coll}[1]{\url{#1}}
\newcommand{\vertex}[2]{\ensuremath{v_{#1}^{#2}}}
\newcommand{\pair}[2]{\ensuremath{[#1, #2]}}
\newcommand{\threePCRP}{\ensuremath{3}-PCRP\xspace}
\newcommand{\twoPCRP}{\ensuremath{2}-PCRP\xspace}
\newcommand{\kPCRP}{$k$-PCRP\xspace}
\newcommand{\OP}{\ensuremath{\mathit{OP}}}
\title{Covering Pairs in Directed Acyclic Graphs}
\author{%
Niko Beerenwinkel\inst{1} \and
Stefano Beretta\inst{2,4} \and
Paola Bonizzoni\inst{2} \and
Riccardo Dondi\inst{3} \and
Yuri Pirola\inst{2}}
\authorrunning{Beerenwinkel \textit{et al.}}
\institute{%
Dept.~of Biosystems Science and Engineering, ETH Zurich, Basel,
Switzerland, \email{niko.beerenwinkel@bsse.ethz.ch}
\and
DISCo, Univ.~degli Studi di Milano-Bicocca, Milan, Italy,
\email{\{beretta,bonizzoni,pirola\}@disco.unimib.it}
\and
Dip.~di Scienze Umane e Sociali, Univ.~degli Studi di Bergamo, Bergamo,
Italy, \email{riccardo.dondi@unibg.it}
\and
Inst.~for Biomedical Technologies, National Research Council, Segrate,
Italy
}
\begin{document}
\maketitle

\begin{abstract}
The Minimum Path Cover problem on directed acyclic graphs (DAGs) is a
classical problem that provides a clear and simple
mathematical formulation for several applications in different areas and that
has an efficient algorithmic solution.
In this paper, we study the computational complexity of two constrained
variants of Minimum Path Cover motivated by the recent introduction of
next-generation sequencing technologies in bioinformatics.
The first problem (MinPCRP), given a DAG and a set of pairs of
vertices, asks for a minimum cardinality set of paths ``covering'' all the
vertices such that both vertices of each pair belong to the same
path.
For this problem,
we show that,
while it is NP-hard to compute if there exists a solution
consisting of at most three paths, it is possible to decide in
polynomial time whether
a solution consisting of at most two paths exists.
The second problem (MaxRPSP), given a DAG and a set of pairs of
vertices, asks for a path containing the maximum number of the
given pairs of vertices.
We show its NP-hardness and also its W[1]-hardness
when parametrized by the number of covered pairs.
On the positive side, we give a fixed-parameter algorithm when the
parameter is the maximum overlapping degree, a natural parameter in the
bioinformatics applications of the problem.

\end{abstract}

\section{Introduction}
\label{sec:intro}
The \emph{Minimum Path Cover} (MinPC) problem is a well-known problem in
graph theory~\cite{fordfulkerson}.
Given a \emph{directed acyclic graph} (DAG), MinPC
asks for a minimum-cardinality set $\Pi$ of paths
such that each vertex of $G$ belongs to at least one path of $\Pi$.
The problem can be solved in polynomial time 
with an algorithm based on the well-known Dilworth's theorem for partially ordered
sets~\cite{Dilworth1950}, which
allows to relate the size of a minimum path cover to that of a
maximum matching in a bipartite graph obtained from the input DAG.

The Minimum Path Cover problem has important applications
in several fields ranging from
bioinformatics~\cite{Beerenwinkel2008,Trapnell2010,bao2013branch} to
software testing~\cite{Ntafos1979}.
In particular, in bioinformatics the Minimum Path Cover problem is
applied to the reconstruction of a set of highly-similar sequences
starting from a large set of their short fragments (called \emph{short
reads})~\cite{Trapnell2010,Beerenwinkel2008}.
More precisely, each fragment is represented by a single vertex and two
vertices are connected if the alignments of the corresponding reads on
the genomic sequence overlap.
In~\cite{Trapnell2010}, the paths on such a graph represent putative
transcripts and a minimum-cardinality set of paths ``covering'' all the
vertices represents a set of protein isoforms which are likely to
originate from the observed reads.
On the other hand, in~\cite{Beerenwinkel2008} the paths on such a graph
represent the genomes of putative viral haplotypes and a
minimum-cardinality set of paths covering the whole graph represents the
likely structure of a viral population.

Recently, different constraints have motivated the definition of new
variants of the minimum path cover problem.
In~\cite{bao2013branch}, given a DAG $D$ and a set $P$ of required
paths, the proposed problem asks for a minimum cardinality set of paths
such that: (1)~each vertex of the graph belongs to some path, and
(2)~each path in $P$ is a subpath of a path of the solution.
The authors have described a polynomial-time algorithm to solve this
problem by collapsing each required path into a single vertex and then
finding a minimum path cover on the resulting graph.
Other constrained problems related to minimum path cover have been
proposed in the context of social network analysis and, given an edge-colored graph, ask for the
maximum number of vertex-disjoint uni-color paths that cover the
vertices of the given graph~\cite{Wu12MaxCDP,Bonizzoni13Alg}.

Some constrained variants of the minimum path cover problem
have been introduced in the past by Ntafos and Hakimi in the context of
software testing~\cite{Ntafos1979} and appear to be relevant for some
sequence reconstruction problems of recent interest in bioinformatics.
More precisely, in software testing each procedure to be tested is
modeled by a graph where vertices correspond to single instructions
and two vertices are connected if the corresponding instructions are
executed sequentially.
The test of the procedure should check each instruction at least once,
hence a minimum path cover of the graph represents a minimum set of
execution flows that allows to test all the instructions.
Clearly, not all the execution flows are possible.
For this reason, Ntafos and Hakimi proposed the concept of required
pairs, which are pairs of vertices that a feasible solution must include
in a path, and that of impossible pairs, which are pairs of
vertices that a feasible solution must not include in the same path.
In particular, one of the problems introduced by Ntafos and Hakimi is
the \emph{Minimum Required Pairs Cover} (MinRPC) problem
where, given a DAG and a set of required pairs, the goal is to
compute a minimum set of paths \emph{covering} all the
required pairs, \ie a minimum set of paths such that, for each
required pair, at least one path contains both vertices of the pair.

The concept of required pairs is also relevant for
sequence reconstruction problems in bioinformatics, as short
reads are often sequenced in pairs (\emph{paired-end reads}) and these
pairs of reads must align to a single genetic sequence.
As a consequence, each pair of vertices corresponding to paired-end
reads must belong to the same path of the cover.
Paired-end reads provide valuable information that, in principle,
could greatly improve the accuracy of the reconstruction.
However, they are often used only to filter out the reconstructed sequences
that do not meet such constraints, instead of directly exploiting them
during the reconstruction process.
Notice that MinRPC asks for a solution that covers only the required
pairs, while in bioinformatics we are also interested in covering all
the vertices.
For this reason, we consider a variant of the Minimum
Path Cover problem, called \emph{Minimum Path Cover with Required Pairs}
(MinPCRP), that, given a DAG and a set of required pairs, asks for a
minimum set of paths covering all the vertices and all the required
pairs.
Clearly, MinPCRP is closely related to MinRPC.
In fact, as we show in Section~\ref{sec:pre}, the same reduction used
in~\cite{Ntafos1979} to prove the NP-hardness of MinRPC can be applied
to our problem, leading to its intractability.

In this paper, we continue the analysis of~\cite{Ntafos1979} by
studying the complexity of path covering problems with required
pairs. More precisely, we study how the complexity
of these problems is influenced by two parameters relevant for the sequence
reconstruction applications in bioinformatics: (1)~the minimum number of
paths covering all the vertices and all the required pairs and (2)~the
maximum \emph{overlapping degree} (defined later).
In the bioinformatics applications we discussed, the first
parameter---the number of covering paths---is often small, thus an
algorithm exponential in the size of the solution could be of interest.
The second parameter we consider in this paper, the maximum
overlapping degree, can be informally defined as follows. 
Two required pairs overlap when there exists a path that
connects the vertices of the pairs, and the path cannot be split in two
disjoint subpaths that separately connect the vertices of the two pairs.
Then, the overlapping degree of a required pair is the number of
required pairs that overlap with it.
In the sequence reconstruction applications, as the distance between two
paired-end reads is fixed, the maximum overlapping degree is small compared
to the number of vertices, hence it is a natural parameter for investigating the
computational complexity of the problem.

First, we investigate how the computational complexity of MinPCRP is
influenced by the first parameter.
In this paper we prove that it is NP-complete to decide if there
exists a solution of MinPCRP consisting of at most three paths
(via a reduction from the $3$-Coloring problem).
We complement this result by giving
a polynomial-time algorithm for
computing a solution with at most $2$ paths, thus establishing a sharp
tractability borderline for MinPCRP when parameterized by the size of
the solution.
These results significantly improve the hardness result that Ntafos and
Hakimi~\cite{Ntafos1979} presented for MinRPC (and that holds also for
MinPCRP), where the solution contains a number of paths which is
polynomial in the size of the input.

Then, we investigate how the computational complexity of MinPCRP is
influenced by the second parameter, the overlapping degree.
Unfortunately, MinPCRP is NP-hard even if the maximum overlapping degree
is 0.
In fact, this can be easily obtained by modifying the reduction
presented in~\cite{Ntafos1979} to hold also for restricted instances of
MinPCRP with no overlapping required pairs.
A natural heuristic approach for solving MinPCRP is the one which computes a
solution by iteratively adding a path that covers a maximum set of
required pairs not yet covered by a path of the solution.
This approach leads to a natural combinatorial problem, the
\emph{Maximum Required Pairs with Single Path} (MaxRPSP) problem, that,
given a DAG and a set of required pairs, asks for a path that covers the
maximum number of required pairs.
We investigate the complexity of MaxRPSP and we show that it is not
only NP-hard, but also W[1]-hard when the parameter is the number of
covered required pairs.
Similarly as MinPCRP, we consider the MaxRPSP problem parameterized by
the maximum overlapping degree but, differently from MinPCRP, we give a
fixed-parameter algorithm for this case.
This positive result shows a gap between the complexity of MaxRPSP and
the complexity of MinPCRP when parameterized by the maximum overlapping
degree. 

The rest of the paper is organized as follows.
First, in Section~\ref{sec:pre} we give some preliminary notions and we
introduce the formal definitions of the two problems.
In Section~\ref{sec:sharpTract}, we investigate the computational
complexity of MinPCRP when the solution consists of a constant number of
paths: we show that it is NP-complete to decide if there exists a solution of
MinPCRP consisting of at most three paths, while the existence of a
solution consisting of at most two paths can be computed in polynomial
time.
In Section~\ref{sec:maxrpsp}, we investigate the computational
complexity of MaxRPSP: we prove its W[1]-hardness when the parameter is
the number of required pairs covered by the path
(Section~\ref{sec:maxrpsp:W-hard}) and we give
a fixed-parameter algorithm when the parameter is the maximum
overlapping degree (Section~\ref{sec:maxrpsp:FPTalgo}).


\section{Preliminaries}
\label{sec:pre}

In this section, we introduce the basic notions used in the rest of
the paper and we formally define the two combinatorial problems we are interested in.

While our problems deal with directed graphs,
we consider both directed and undirected graphs.
We denote an \emph{undirected graph} as $G=(V,E)$ where $V$ is the set
of vertices and $E$ is the set of (undirected)
edges, and a \emph{directed graph} as $D=(N,A)$ where $N$ is the set of
vertices and $A$ is the set of (directed) arcs.
We denote an edge of $G=(V,E)$ as $\{v,u\}\in E$ where $v,u\in
V$. Moreover, we denote an arc of $D=(N,A)$ as $(v,u)\in A$ where
$v,u\in N$.

Given a directed graph $D=(N,A)$, a \emph{path} $\pi$ from vertex $v$ to
vertex $u$, denoted as $vu$-path, is a sequence of vertices
$\langle v_1,\dots,v_n\rangle$ such that $(v_i,v_{i+1})\in A$,
$v=v_1$ and $u=v_n$.
We say that a vertex $v$
\emph{belongs to} a path $\pi=\langle v_1,\dots,v_n\rangle$, denoted
as $v\in \pi$,
if $v = v_i$, for some $1 \leq i \leq n$.
Given a path $\pi=\langle v_1,\dots,v_n\rangle$,
we say that a path $\pi'= \langle v_i, v_{i+1},\dots,v_{j-1},v_j\rangle$,
with $1 \leq i \leq j \leq n$, is a subpath of $\pi$.
Given a set $N'\subseteq N$ of
vertices, a path $\pi$ \emph{covers} $N'$ if every vertex of $N'$
belongs to $\pi$.

In the paper, we consider a set $R$ of pairs of vertices in $N$. We denote each pair
as $\pair{v_i}{v_j}$, to avoid ambiguity with the notations of edges and arcs.

Now, we are able to give the definitions of the combinatorial problems we
are interested in.

\begin{problem}{\emph{Minimum Path Cover with Required Pairs}
    (MinPCRP)}\\*
\noindent
\textit{Input:}
a directed acyclic graph $D=(N,A)$, a source $s\in N$, a sink
$t\in N$, and a set $R=\{ \pair{v_x}{v_y} \mid v_x,v_y\in N, v_x\neq
v_y\}$ of required pairs.\\*
\textit{Output:} a minimum cardinality set
$\Pi=\{\pi_1,\dots,\pi_n\}$ of directed $st$-paths such that every
vertex $v\in N$ belongs to at least one $st$-path $\pi_i \in \Pi$ and
every required pair $\pair{v_x}{v_y} \in R$ belongs to at least one $st$-path
$\pi_i\in \Pi$, i.e.~$v_x$, $v_y$ belongs to $\pi_i$.
\end{problem}

\begin{problem}{\emph{Maximum Required Pairs with Single Path}
  (MaxRPSP)}\\*
\noindent
\textit{Input:} a directed acyclic graph $D=(N,A)$, a source $s\in N$, a sink
$t\in N$ and a set $R=\{ \pair{v_x}{v_y} \mid v_x,v_y\in N, v_x\neq
v_y\}$ of required pairs.\\*
\textit{Output:} an $st$-path $\pi$ that
covers  a set $R'=\{ \pair{v_x}{v_y} \mid v_x,v_y\in \pi
\}\subseteq R$ of maximum cardinality.
\end{problem}

Two required pairs $\pair{u'}{v'}$ and $\pair{u''}{v''}$ in $R$
\emph{overlap} if there exists a path $\pi$ in $D$ such that the four
vertices appear in $\pi$ in one of the following orders (assuming that the
vertex $u'$ appears before $u''$ in $\pi$), where $v'$ and $u''$ are two distinct
vertices of $G$ (see Fig.~\ref{fig:ex:overl-pair}): 
\begin{compactitem}
\item $\langle u', u'', v', v'' \rangle$ (the two required pairs are \emph{alternated});
\item $\langle u', u'', v'', v' \rangle$ (the required pair $\pair{u''}{v''}$ is \emph{nested} in $\pair{u'}{v'}$ ).
\end{compactitem}
Notice that, from this definition, the required pairs $\pair{x}{y}$ and $\pair{y}{z}$
do not overlap.

Finally, consider a required pair $\pair{u'}{v'}$ of $R$.
We define the \emph{overlapping degree}
of $\pair{u'}{v'}$ as the number of required pairs in $R$ that overlap with
$\pair{u'}{v'}$.

\begin{figure}[t]
  \centering
  \includegraphics[width=.99\linewidth]{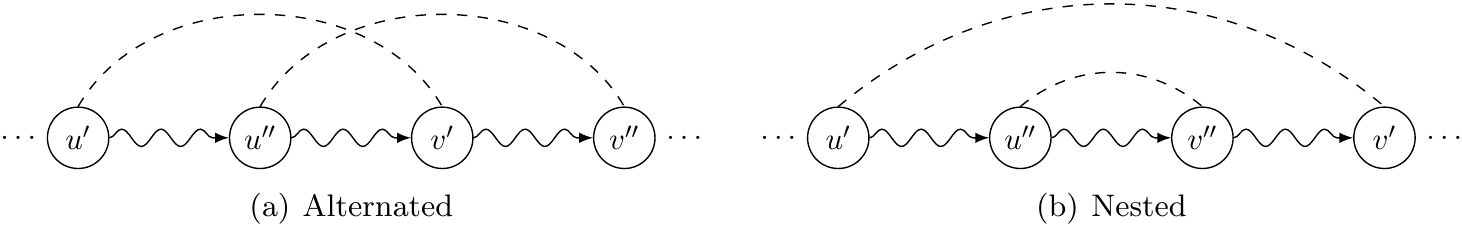}
  \caption{Examples of overlapping required pairs.
    The two required pairs $\pair{u'}{v'}$ and $\pair{u''}{v''}$ are
    represented by dashed lines. In (a) the required pairs are
    \emph{alternated}, while in (b) they are \emph{nested}.}
  \label{fig:ex:overl-pair}
\end{figure}

%
%
%

\paragraph{Hardness of MinPCRP.}
As we mentioned in the introduction,
MinPCRP is related to a combinatorial problem which has been studied
in the context of program testing~\cite{Ntafos1979}, where it is shown
to be NP-hard.
More precisely, given a directed acyclic graph $D=(N,A)$, a source $s\in
N$, a sink $t\in N$ and a set $R=\{ \pair{v_x}{v_y} \mid v_x,v_y\in N,
v_x\neq v_y\}$ of required pairs, the \emph{Minimum Required Pairs
  Cover} (MinRPC) problem asks for a minimum cardinality set
$\Pi=\{\pi_1,\dots,\pi_n\}$ of directed $st$-paths such that every
required pair $\pair{v_x}{v_y} \in R$ belongs to at least one $st$-path
$\pi_i\in \Pi$, i.e.~$v_x,v_y\in \pi_i$.

MinRPC can be easily reduced to MinPCRP due to the following property:
each vertex of the graph $D$ (input of MinRPC) must belong to at least
one required pair. Indeed, if this condition does not hold for some
vertex $v$, we can modify the graph $D$ by contracting $v$ (that is
removing $v$ and adding an edge $(u,z)$ to $A$, for each $u,z \in N$
such that $(u,v),(v,z) \in A$). This implies that, since in an
instance of MinRPC all the resulting vertices belong to some required
pair, a feasible solution of that problem must cover every
vertex of the graph. Then, a solution of MinRPC is also a solution of
MinPCRP, which implies that MinPCRP is NP-hard.

MinPCRP on \emph{directed} graphs (not necessarily acyclic) is as hard
as MinPCRP on DAGs.
In fact, since each strongly connected component can be covered with a
single path, we can replace them with single vertices, obtaining a DAG
and without changing the size of the solution.
Clearly, MinPCRP on general graphs and requiring that the covering
paths are simple is as hard as the Hamiltonian path problem, which is
NP-complete.

\section{A Sharp Tractability Borderline for MinPCRP}
\label{sec:sharpTract}

In this section, we investigate the computational complexity of MinPCRP
and we give a sharp tractability borderline for \kPCRP, the restriction of MinPCRP
where we ask whether there exist $k$ paths that cover all the vertices
of the graph and all the set of required pairs.
First, we show (Sect.~\ref{sec:PCRP3}) that \threePCRP is NP-complete.
This result implies that \kPCRP does not belong to the class XP
\footnote{We recall that the class XP contains those problems that,
  given a parameter $k$, can be solved in time $O(n^{f(k)})$}, so
it is probably hopeless to look for an algorithm having complexity $O(n^k)$, and hence
for a fixed-parameter algorithm in $k$.
We complement this result by giving (Sect.~\ref{sec:PCRP2}) a polynomial time algorithm for \twoPCRP,
thus defining a sharp borderline between tractable and intractable
instances of MinPCRP.

\subsection{Hardness of \threePCRP}
\label{sec:PCRP3}

In this section we show that \threePCRP is NP-complete.
We prove this result via a reduction from the well-known $3$-Coloring
(3C) problem which, given an undirected (connected) graph $G=(V,E)$,
asks for a coloring $c: V \rightarrow \{c_1, c_2, c_3\}$
of the vertices of $G$ with exactly $3$ colors, such that, for every
$\{v_i,v_j\} \in E$, we have $c(v_i) \neq c(v_j)$.

Starting from an undirected graph $G=(V,E)$ (instance of 3C),
we construct a corresponding instance $\langle D=(N,A), R\rangle$ of \threePCRP
as follows.
For every subset $\{v_i, v_j\}$ of cardinality $2$ of $V$, we define a
graph $D_{i,j}=(N_{i,j}, A_{i,j})$ (in the following we assume that, for each
$D_{i,j}$ associated with set $\{v_i, v_j\}$, $i < j$).
The vertex set $N_{i,j}$ is $\{ s^{i,j}, n_i^{i,j}, n_j^{i,j}, f^{i,j},
t^{i,j} \}$.
The set $A_{i,j}$ of arcs connecting the vertices of $N_{i,j}$ can have two possible
configurations, depending on the fact that $\{v_i, v_j\}$ belongs or
does not belong to $E$.
In the former case, that is $\{v_i, v_j\} \in E$,
$D_{i,j}$ is in \emph{configuration (1)} (see Fig.~\ref{fig:MinPCRP3-config}\,(a))
and:
\[
A_{i,j}= \{(s^{i,j}, n_i^{i,j}), (s^{i,j} n_j^{i,j}), (s^{i,j}, f^{i,j}),
(n_i^{i,j}, t^{i,j}), (n_j^{i,j},t^{i,j}), (f^{i,j}, t^{i,j})\}
\]

In the latter case, that is $\{v_i, v_j\} \notin E$,
$D_{i,j}$ is in \emph{configuration (2)} (see Fig.~\ref{fig:MinPCRP3-config}\,(b))
and:
\[
A_{i,j}= \{(s^{i,j}, n_i^{i,j}), (s^{i,j}, f^{i,j}), (n_i^{i,j}, n_j^{i,j}),  (n_j^{i,j},t^{i,j}),
(f^{i,j}, t^{i,j}) \}
\]


\begin{figure}[t!]
\centering
\includegraphics[width=\linewidth]{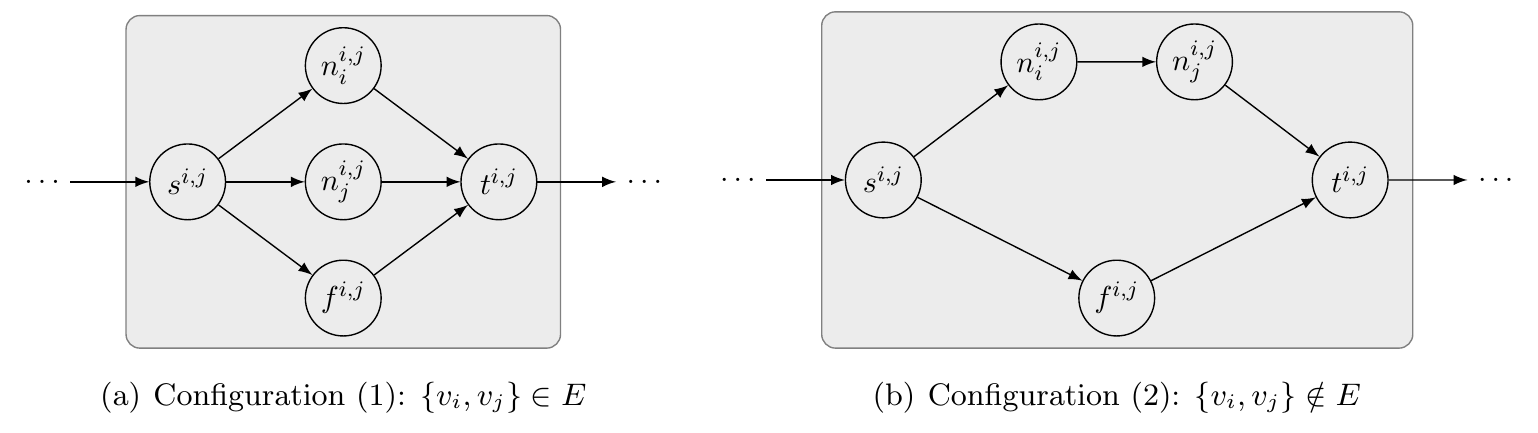}
\caption{Example of the two configurations of subgraph $D_{i,j}=(N_{i,j},
  A_{i,j})$ associated with a pair $\{v_i, v_j\}$ of vertices of a graph
  $G=(V,E)$.
}
\label{fig:MinPCRP3-config}
\end{figure}

The whole graph $D=(N,A)$ is constructed by concatenating the graphs
$D_{i,j}$ (for all $1 \le i < j \le n$) according to the lexicographic
order of their indices $i,j$.
The sink $t^{i,j}$ of each graph $D_{i,j}$ is connected to the source
$s^{i',j'}$ of the graph $D_{i',j'}$ which immediately follows
$D_{i',j'}$.
A distinguished vertex $s$ is connected to the source of $D_{1,2}$ (\ie
the first subgraph), while the sink of $D_{n-1,n}$ (\ie the last
subgraph) is connected to a second distinguished vertex $t$.
%
%
Fig.~\ref{fig:MinPCRP3-graph} depicts such a construction.

The set $R$ of required pairs is defined as follows.
\[
R= \{ \pair{s}{f^{i,j}} \mid \{v_i, v_j\} \in E \} \cup \bigcup_{1 \le i
  \le n} R_i
\text{\quad where } R_i = \{
\pair{n_i^{i,j}}{n_i^{i,h}} \mid 1 \leq j \leq h \leq n \}
\]

\begin{figure}[t!]
\centering
\includegraphics[width=\linewidth]{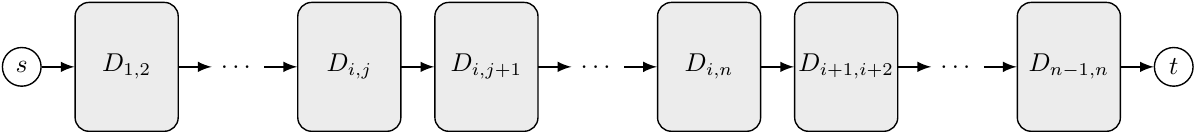}
\caption{Example of graph $D=(N, A)$ associated with graph
  $G=(V,E)$. Grey boxes represent subgraphs $D_{i,j}$ in one of the
  two possible configurations of Fig.~\ref{fig:MinPCRP3-config}.
}
\label{fig:MinPCRP3-graph}
\end{figure}

The following lemmas prove the correctness of the reduction.

\begin{lemma}
\label{lem:3-minpcrp-hard1}
Let $G=(V,E)$ be an undirected (connected) graph and let $\langle D=(N,A),
R\rangle$ be the corresponding instance of \threePCRP.
Then, given a $3$-coloring of $G$ we can compute in polynomial time three paths
of $D$ that cover all its vertices and every required pair in $R$.
\end{lemma}
\begin{proof}
Consider a $3$-coloring of $G$ and let $\{V_1, V_2, V_3\}$ be the
tri-partition of $V$ induced by the $3$-coloring.
We show how to compute in polynomial time three paths $\pi_1$, $\pi_2$, $\pi_3$
that cover all the vertices of $D$ and every required pair in $R$.
For each
$v_i \in V_c$, path $\pi_c$
passes through vertices $n_i^{i,j}$ of subgraphs $D_{i,j}$ for every
$v_j \in V$,
while for each subgraph $D_{p,q}$ such that $v_p, v_q \notin V_c$,
$\pi_c$ passes through verteces $f^{p,q}$.
Notice that each $\pi_c$ is well-defined, since when $n_i^{i,j}$, $n_j^{i,j}$ are associated
with the same color $c$, $D_{i,j}$ is in configuration (2), hence the path
can pass through both vertices $n_i^{i,j}$ and $n_j^{i,j}$.

We show that $\pi_1$, $\pi_2$, $\pi_3$ cover every required pair in $R$.
Notice that for each $\{v_i,v_j\} \in E$, since $v_i$ and $v_j$ have different
colors, by construction one of the paths $\pi_1$, $\pi_2$, $\pi_3$
passes through $n_i^{i,j}$, while another one 
passes through $n_j^{i,j}$.
Now, we show that every required pair in $R_i$ is covered.
By construction, the vertices $n_i^{i,j}$ of $D$ associated with the
same vertex $v_i$ of $G$ belong to the same path $\pi_c$ where $c$ is
the color of $v_i$.
Therefore, all the required pairs in each $R_i$ are covered by one of
the three paths.
Now, we show that $\pi_1$, $\pi_2$, $\pi_3$ cover the
required pairs $\{ \pair{s}{f^{i,j}} \mid 1 \leq i < j \leq n \}$.
Indeed, consider a subgraph $D_{i,j}$, and assume w.l.o.g.~that $v_i$
has color $c$ and that $v_j$ has color $c'$.
By construction, path $\pi_{c''}$ (with $c'' \notin \{c, c'\}$)
passes through $f^{i,j}$.
Then, $\pi_1$, $\pi_2$, $\pi_3$ cover all the required pairs
in $R$.

Finally, in order to show that all the vertices of $D$ are covered by
at least one path,
the only vertices that might be not covered are $s^{i,j}$ and $t^{i,j}$,
for $1\leq i<j\leq n$, since they do not belong to any required pair.
However, these vertices are articulation points, hence all the three
paths necessarily pass through them.
\qed
\end{proof}

\begin{lemma}
\label{lem:3-minpcrp-hard2}
Let  $G=(V,E)$ be an undirected graph and let $\langle D=(N,A),
R\rangle$ be the corresponding instance of \threePCRP.
Then, given three paths in $D$ that cover all its vertices and
every required pair in $R$ we can compute in polynomial time a
$3$-coloring of $G$.
\end{lemma}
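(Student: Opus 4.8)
The plan is to invert the reduction of Lemma~\ref{lem:3-minpcrp-hard1}: given three $st$-paths $\pi_1,\pi_2,\pi_3$ of $D$ that cover all vertices and all required pairs, I will recover a proper $3$-colouring of $G$ by recording, for each vertex $v_i$ of $G$, which of the three paths carries the copies of $v_i$ placed inside the gadgets. First I would set up the structural facts already used in Lemma~\ref{lem:3-minpcrp-hard1}: because the gadgets $D_{i,j}$ are concatenated linearly and joined only through the vertices $s^{i,j}$ and $t^{i,j}$, each such vertex is an articulation point of $D$, hence lies on every $st$-path; consequently each $\pi_c$, restricted to the vertices of one gadget, is an $s^{i,j}t^{i,j}$-path inside that gadget. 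Inspecting the two configurations, in configuration~(1) such a path visits exactly one of $n_i^{i,j}, n_j^{i,j}, f^{i,j}$, while in configuration~(2) it visits either both $n_i^{i,j}$ and $n_j^{i,j}$ or only $f^{i,j}$. The key consequence is obtained for a gadget $D_{i,j}$ in configuration~(1), i.e.\ with $\{v_i,v_j\}\in E$: each of the three paths selects exactly one of $n_i^{i,j}, n_j^{i,j}, f^{i,j}$ and all three vertices must be covered, so (a surjection from a $3$-set to a $3$-set being a bijection) each of these vertices lies on exactly one path, and the three paths carrying them are pairwise distinct.

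I would then define the colouring. Since $G$ is connected, every $v_i$ has a neighbour $v_j$, so the gadget for $\{v_i,v_j\}$ is in configuration~(1), and the copy of $v_i$ in that gadget lies on a unique path; let $c(v_i)$ be its index. Well-definedness (independence of the chosen neighbour) is where the sets $R_i$ enter: they require that any two copies of $v_i$ lie together on some common path, and since one specific copy of $v_i$ already lies on the single path $\pi_{c(v_i)}$, every copy of $v_i$ must lie on $\pi_{c(v_i)}$ as well. Properness then falls out: for $\{v_i,v_j\}\in E$ the gadget $D_{i,j}$ is in configuration~(1), its vertex $n_i^{i,j}$ lies on $\pi_{c(v_i)}$ and its vertex $n_j^{i,j}$ lies on $\pi_{c(v_j)}$, and the counting argument above says these two copies lie on distinct paths, hence $c(v_i)\neq c(v_j)$. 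All steps are plainly polynomial.

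I expect the well-definedness step to be the main obstacle: one has to argue carefully that the merely pairwise constraints collected in $R_i$ force \emph{all} copies of a vertex onto a single path, which works only because connectivity of $G$ supplies, via a configuration-(1) gadget, an anchor copy that is known to sit on a unique path. A secondary point worth a short remark is the phrase ``exactly $3$ colours'' in the definition of $3$-Coloring: the colouring extracted here may use fewer than three colours (for instance when $G$ is bipartite), but this causes no problem, since only a proper colouring with palette $\{c_1,c_2,c_3\}$ is required, in agreement with the fact that the construction in Lemma~\ref{lem:3-minpcrp-hard1} stays valid even when a colour class is empty.
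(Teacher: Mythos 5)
Your proposal is correct and follows essentially the same route as the paper's proof: connectivity of $G$ yields a configuration-(1) gadget whose middle vertices are hit by three distinct paths, the pairs in $R_i$ then force all copies of $v_i$ onto that single path, the colour is the index of that path, and properness follows from the configuration-(1) structure (the paper phrases this last step contrapositively---same colour forces configuration (2), hence a non-edge---which is the same argument). Your side remarks on well-definedness and on possibly using fewer than three colours are harmless additions, not a different approach.
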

\begin{proof}
Consider three paths $\pi_1$, $\pi_2$, $\pi_3$ of $D$ that cover  all
the vertices of $D$ and every required pair in $R$.
We define the corresponding $3$-coloring of the graph $G$, consisting
of the colors $c_1$, $c_2$, $c_3$.

First, we prove a property of the three paths $\pi_1$, $\pi_2$, $\pi_3$.
We show that, given a vertex $v_i \in V$, there exists at least one path
among $\pi_1$, $\pi_2$, $\pi_3$ that covers all the required pairs in $R_i$.
Consider a vertex $v_i \in V$.
Since $G$ is connected, it follows that there exists at least one vertex adjacent
to $v_i$, w.l.o.g.~$v_j$, such that $\{v_i, v_j\} \in E$. Now, consider the subgraph
$D_{i,j}$. By construction, since $D_{i,j}$ has a configuration (1),
a solution of MinPCRP must contain three different paths, each one passing through one
of the vertices $n_i^{i,j}$, $n_j^{i,j}$, $f^{i,j}$.
Now, assume that path $\pi_1$ passes through $n_i^{i,j}$.
Notice that $\pi_2$, $\pi_3$ cannot pass through $n_i^{i,j}$.
But then, since $\pi_1$ is the only path that covers $n_i^{i,j}$
and since $R_i$ contains a pair $\pair{n_i^{i,j}}{n_i^{i,h}}$,
for each $h \neq j$, it follows that
all the vertices $n_i^{i,h}$, $1 \leq h \leq n$, must belong to $\pi_1$.
It follows that, given a vertex $v_i \in V$, there exists one path
in $\{\pi_1, \pi_2, \pi_3  \}$ that covers all the required pairs in $R_i$.
Moreover, since all the three paths pass through the vertices
$s^{i,j}$ and $t^{i,j}$ for $1\leq i<j\leq n$, then all the vertices
of $D$ are covered by $\{\pi_1, \pi_2, \pi_3  \}$.

Now, we define a $3$-coloring of $G$, where $C= \{c_1, c_2, c_3 \}$ is
the set of colors.
If a required pair in $R_i$ is covered by a path $\pi_x$, $1\leq x
\leq 3$, then we assign the color $c_x$ to vertex $v_i$.
Notice that the coloring is feasible, that is $c(v_i) \neq c(v_j)$
when $\{ v_i,v_j \} \in E$. Indeed, consider two vertices $v_i$, $v_j$
associated with the same color, and consider the two corresponding sets $R_i$, $R_j$ of required pairs.
By construction, it follows that $R_i$, $R_j$ are covered by the same path,
say $\pi_1$.
Consider the subgraph $D_{i,j}$. Since $R_i$, $R_j$ are both covered by $\pi_1$,
it follows that $D_{i,j}$ must have a configuration (2), hence $\{v_i,v_j\} \notin E$.
Hence we have defined a $3$-coloring of $G$.
\qed
\end{proof}

As a consequence of the previous lemmas, we can easily prove the following result.

\begin{theorem}
\label{teo:MinPCRP3hard}
\threePCRP is NP-complete.
\end{theorem}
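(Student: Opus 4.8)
The plan is to combine the two lemmas just established with a routine membership argument. First, observe that \threePCRP{} is in NP: given three candidate $st$-paths of $D$, we can verify in polynomial time that every vertex of $N$ lies on at least one of them and that, for every required pair $\pair{v_x}{v_y}\in R$, some path contains both $v_x$ and $v_y$; since $|N|$, $|A|$, and $|R|$ are polynomial in the size of $G$, this check is efficient, so a triple of paths serves as a polynomial-size certificate.

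Next I would invoke the reduction constructed above. Given an instance $G=(V,E)$ of $3$-Coloring, the instance $\langle D=(N,A),R\rangle$ of \threePCRP{} is built in polynomial time, since for each of the $\binom{|V|}{2}$ unordered pairs $\{v_i,v_j\}$ we add the constant-size gadget $D_{i,j}$ and a bounded number of arcs linking consecutive gadgets, and $R$ has size polynomial in $|V|$. Lemma~\ref{lem:3-minpcrp-hard1} shows that a $3$-coloring of $G$ yields three paths covering all vertices and all required pairs, while Lemma~\ref{lem:3-minpcrp-hard2} shows the converse; hence $G$ is $3$-colorable if and only if $D$ admits a solution of \threePCRP{} with at most three paths. (One minor point to address: Lemma~\ref{lem:3-minpcrp-hard1} assumes $G$ connected, but $3$-Coloring on connected graphs is still NP-hard, so restricting to connected inputs costs nothing.)

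Since $3$-Coloring is NP-complete and the reduction is a polynomial-time many-one reduction, \threePCRP{} is NP-hard; together with membership in NP this gives NP-completeness. The statement follows.

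There is essentially no further obstacle here: all the combinatorial content lives in Lemmas~\ref{lem:3-minpcrp-hard1} and~\ref{lem:3-minpcrp-hard2}, and the theorem is just the bookkeeping that packages them. If anything needs care, it is making the ``if and only if'' precise across the two lemmas (the forward direction producing a feasible three-path solution, the backward direction extracting a proper coloring) and noting explicitly that the construction is polynomial-time computable, so the reduction is legitimate.
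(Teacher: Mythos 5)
Your argument matches the paper's proof essentially verbatim: membership in NP via polynomial-time verification of the three candidate paths, and NP-hardness by combining Lemma~\ref{lem:3-minpcrp-hard1}, Lemma~\ref{lem:3-minpcrp-hard2}, and the NP-completeness of $3$-Coloring through the polynomial-time construction of $\langle D, R\rangle$. Your extra remark about restricting $3$-Coloring to connected graphs is a harmless and correct clarification; nothing is missing.
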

\begin{proof}
The NP-hardness of \threePCRP follows directly from Lemma~\ref{lem:3-minpcrp-hard1}
and Lemma~\ref{lem:3-minpcrp-hard2} and from the NP-completeness of
3C~\cite{GareyJohnson}.
\threePCRP is in NP, since, given three paths $\pi_1$, $\pi_2$,
$\pi_3$, we can verify in polynomial time
that $\pi_1$, $\pi_2$, $\pi_3$ cover all the vertices of $D$ and that
every required pair in $R$ is covered by some path in $\{ \pi_1, \pi_2, \pi_3 \}$.
\qed
\end{proof}

\subsection{A Polynomial Time Algorithm for \twoPCRP}
\label{sec:PCRP2}

In this section we give a polynomial time algorithm for computing a solution of
\twoPCRP. 
Notice that $1$-PCRP can be easily solved in polynomial time, as
there exists a solution of $1$-PCRP if and only if
the reachability relation of the vertices of the input graph is a total order.

The algorithm for solving \twoPCRP is based on a
polynomial-time reduction to the $2$-Clique Partition problem, which, given an undirected graph
$G=(V,E)$, asks whether there exists a partition of $V$ in two sets $V_1$, $V_2$
both inducing a clique in $G$.
The $2$-Clique Partition problem is polynomial-time solvable~\cite[probl.~GT15]{GareyJohnson}.
To perform this reduction we assume that given $\langle D=(N,A), R
\rangle$, instance of \twoPCRP, every vertex of the graph $D$ belongs
to at least one required pair in $R$.
Otherwise, we add to $R$ the required pairs $\pair{s}{v_i}$ for all $v_i
\in N$ that do not belong to any required pair.
Therefore, a solution that covers all the required pairs
in $R$ covers also all the vertices, hence it is a feasible solution of
\twoPCRP.
Moreover, notice that this transformation does not
affect the solution of \twoPCRP, since all the paths start from $s$
and cover all the nodes of the graph, including the additional required
pairs.

The algorithm, starting from an instance $\langle D=(N,A), R \rangle$ of \twoPCRP,
computes in polynomial time a corresponding undirected graph $G=(V,E)$
where:
\begin{itemize}

\item $V = \{v_c\mid c \in R \}$

\item $E = \{ \{ v_{c_i}, v_{c_j} \} \mid \text{ there exists a path
  in $D$ that covers both $c_i$ and $c_j$} \}$

\end{itemize}

Given a set of required pairs $R' \subseteq R$, we denote by $V(R')$
the corresponding set of vertices of $G$ (\ie $V(R') = \{ v_c \mid c \in
R'\}$).

The algorithm is based on the following fundamental property.

\begin{lemma}\label{lem:2-case-clique}
Given an instance $\langle D=(N,A), R \rangle$ of \twoPCRP and the corresponding
graph $G=(V,E)$, then there exists a path $\pi$ that covers a set $R'$ of
required pairs if and only if $V(R')$ is a clique of $G$.
\end{lemma}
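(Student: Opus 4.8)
The plan is to reduce the statement to the standard order-theoretic characterisation of when several vertices of a DAG lie on a common path. Recall that in a DAG the reachability relation $\reach$ (where $u\reach v$ iff there is a $uv$-path) is a partial order, and that a set $W\subseteq N$ is covered by a single path if and only if $W$ is a \emph{chain}, i.e.\ totally ordered by $\reach$. One direction is trivial; for the other, list $W=\{w_1,\dots,w_m\}$ with $w_1\reach w_2\reach\cdots\reach w_m$, pick a $w_\ell w_{\ell+1}$-path for each $\ell$, and concatenate them — the result is a path (automatically simple, since a repeated vertex in a DAG would close a cycle) that covers $W$. Throughout I use the standing assumptions of this section, plus the observations that every vertex of $D$ lies on some $st$-path and that both vertices of every required pair are $\reach$-comparable: all of this is checkable in polynomial time, and if a vertex or a pair violates it then \twoPCRP has no solution at all (and such a pair may be discarded from $R$). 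Under these assumptions a path covering $W$ can always be extended, by an $sw_1$-prefix and a $w_m t$-suffix, to an $st$-path covering $W$, so the two readings of ``path'' in the statement coincide.

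\textbf{($\Rightarrow$).} If a path $\pi$ covers $R'$, then for any two pairs $c_i,c_j\in R'$ the path $\pi$ covers both $c_i$ and $c_j$, so $\{v_{c_i},v_{c_j}\}\in E$ by the definition of $G$. Hence $V(R')$ is a clique (trivially so when $|R'|\le 1$).

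\textbf{($\Leftarrow$).} Assume $V(R')$ is a clique and let $W$ be the set of all vertices occurring in a pair of $R'$. I claim $W$ is a chain. Pick $p,q\in W$, with $p$ in $c_i\in R'$ and $q$ in $c_j\in R'$. If $c_i=c_j$, then $p,q$ are the two vertices of one required pair and are $\reach$-comparable by assumption. If $c_i\neq c_j$, then $\{v_{c_i},v_{c_j}\}\in E$ since $V(R')$ is a clique, so some path of $D$ covers both $c_i$ and $c_j$, hence contains $p$ and $q$; thus $p\reach q$ or $q\reach p$. So $W$ is pairwise $\reach$-comparable, i.e.\ a chain, and by the characterisation above some $st$-path covers $W$ and therefore covers every pair of $R'$.

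The one real subtlety — the step I expect to be the crux — is the upgrade in the ($\Leftarrow$) direction from pairwise coverability (what the edges of $G$ record) to simultaneous coverability of all of $R'$; this works precisely because reachability in a DAG is transitive, so ``pairwise comparable'' already amounts to ``chain''. The remaining care is entirely in the standing assumptions, which rule out the degenerate cases (pairs with incomparable endpoints, vertices lying on no $st$-path) that would otherwise invalidate the equivalence.
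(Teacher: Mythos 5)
Your proof is correct, but it takes a genuinely different route from the paper. The paper proves the lemma by induction on $|R'|$: for the clique~$\Rightarrow$~path direction it takes a path $\pi$ covering $R'\setminus\{c\}$ (induction hypothesis) and explicitly splices into $\pi$ subpaths of the pairwise witnessing paths $\pi_{c_i,c}$ so as to pick up the missing vertices $n_x$, $n_y$ of the new pair $c$, choosing the insertion point as the last vertex $n_z$ of $\pi$ whose witnessing path visits $n_z$ before $n_x$. You instead argue globally: reachability in a DAG is a partial order, a vertex set lies on one path iff it is a chain, and the clique condition (together with the standing assumption that each individual pair is coverable, which the paper also invokes in its $k=1$ base case) gives pairwise comparability of the whole vertex set $W$ of $R'$, hence a chain, hence a single covering path obtained by concatenating paths between consecutive elements of $W$ in the reachability order; your observation that such a concatenation cannot repeat vertices in a DAG is what makes this clean. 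The crux in both arguments is the same upgrade from pairwise to simultaneous coverability, but you obtain it from transitivity of the reachability order rather than from an inductive splicing construction. Your version is shorter, avoids the somewhat delicate bookkeeping of the paper's splice (e.g.\ the choice of $n_z$ and the claim that the new path still contains all of $\pi$), and immediately yields a polynomial-time construction (topologically order $W$ and concatenate); the paper's version is more incremental and mirrors how one would grow a covering path pair by pair. Your side remarks on extending a covering path to an $st$-path and on discarding instances with an uncoverable pair or a vertex on no $st$-path are consistent with the assumptions the paper makes around this lemma, so no gap there.
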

\begin{proof}
We prove the lemma by induction on the number $k$ of required pairs
(vertices, resp.) of $R'$ ($V(R')$, resp.).

When $k=0$ the lemma trivially holds, in fact having no required
pairs, \ie $R' = \varnothing$, induces an empty clique, \ie $V(\varnothing)$.

If $k=1$, then we can assume that there exists at least one path in
$D$ that covers the only required pair $c$ (otherwise no solution for
\twoPCRP exists), and $V(\{c\})$ induces a clique (of size $1$) in
$G$.

Now, assume that the lemma holds for every set of required pairs in $R$ (or set
of vertices of $G$) of size $k$,
we show that it holds also for a set of required pairs in $R$ (or set of vertices of $G$)
of size $k+1$.

Consider a path $\pi$ that covers a set $R'$ of $k+1$ required pairs.
We show that $V(R')$ induces a clique in $G$.
Let $c$ be a required pair in $R'$ and let $R''=R' \setminus \{ c \}$.
By induction hypothesis, $V(R'')$ is a clique of $G$.
Since $\pi$ passes through all the vertices belonging to required pairs of $R'$,
it follows
that there exists a path covering both the required pairs $c_i$ and
$c$, for every $c_i \in R''$. Hence, by construction,
$\{ v_{c_i} , v_c \} \in E$, for every $v_{c_i} \in V(R'')$, and so we
can conclude that $V(R')$ is a clique of $G$.

Consider a clique $V(R')$ of size $k+1$.
We show that there exists a path covering the set $R'$ of required
pairs.
Let $c=\pair{n_x}{n_y}$ be a required pair in $R'$ and let $R''=R'
\setminus \{ c \}$.
Clearly, $V(R'')$ induces a clique of size $k$ in $G$.
By induction hypothesis, there exists a path $\pi$ that covers all the
required pairs in $R''$.
Starting from path $\pi$, we can compute (in polynomial
time) a path $\pi'$ that covers $R'$.
Notice that either $n_x$ or $n_y$ does not belong to $\pi$,
otherwise $\pi$ would already cover the required pair $c$.
Assume w.l.o.g.~that $n_x$ does not belong to $\pi$.
Since for each vertex $v_{c_i} \in V(R'')$ there exists an edge $\{
v_{c_i}, v_c \}$ in $G$,
it follows that for each required pair $c_i \in R''$ there exists a
path $\pi_{c_i,c}$ connecting the
vertices of $c_i$ and $c$ (hence, in particular, $\pi_{c_i,c}$ passes through $n_x$).
Consider the vertices of $R''$ in the total order induced by $\pi$.
There exists a vertex $n_z$ in $\pi$ (possibly $n_z$ is the source $s$) such that
$n_z$ belongs to a required pair $c_z \in R''$, and $n_z$ is the last vertex in $\pi$ of
a required pair in $R''$ for which the path $\pi_{c_z,c}$ passes through $n_z$ and then $n_x$.
Let $n_{z+1}$ be the successor of $n_z$ in $\pi$ and let $c_{z+1}$ the
required pair containing $n_{z+1}$.
Notice that $\pi_{c,c_{z+1}}$ passes through $n_x$ and then
$n_{z+1}$.
Now, we can compute a path $\pi''$ by
concatenating the following subpaths: the subpath of $\pi$ from $s$ to
$n_z$, the subpath of $\pi_{c_z,c}$ from $n_z$ to
$n_x$, the subpath of $\pi_{c_{z+1},c}$ from $n_x$ to $n_{z+1}$ and the
subpath of $\pi$ from $n_{z+1}$ to $t$.
By construction it is easy to see that $\pi$ is a subpath of $\pi''$, hence
$\pi''$ covers all the required
pairs in $R''$ and passes through $n_x$.
A similar construction can be applied to $\pi''$ to cover $n_y$ (if
$n_y$ does not already belong to $\pi$), hence
obtaining a path $\pi'$ that covers every required pair in $R'$.
\qed
\end{proof}

From Lemma~\ref{lem:2-case-clique}, it follows that, in order to
compute the existence of a solution of \twoPCRP over the instance
$\langle D=(N,A), R \rangle$ (in which every vertex of $D$ belongs to
at least one required pair in $R$),
we have to compute if there exists a $2$-Clique Partition of the
corresponding graph $G$.
Computing the existence of a $2$-Clique Partition over a graph $G$ is
equivalent to compute if there exists a $2$-Coloring of the
complement graph $G'$ (hence deciding if $G'$ is bipartite), which is
well-known to be solvable in polynomial
time~\cite[probl.~GT15]{GareyJohnson}.
We can conclude that \twoPCRP can be decided in polynomial time.

\section{Parameterized Complexity of MaxRPSP}
\label{sec:maxrpsp}

In this section, we consider the parameterized complexity of
MaxRPSP. We show that although MaxRPSP is W[1]-hard when parameterized
by the optimum, \ie the number of required pairs covered by a single
path (Section~\ref{sec:maxrpsp:W-hard}), the problem becomes fixed-parameter
tractable if the maximum number of overlapping required pairs is a parameter
(Section~\ref{sec:maxrpsp:FPTalgo}).

\subsection{W[1]-hardness of MaxRPSP Parameterized by the Optimum}
\label{sec:maxrpsp:W-hard}

In this section, we investigate the parameterized complexity
of MaxRPSP when parameterized by the size of the solution, that is the maximum
number of required pairs covered by a single path, and we prove that the problem
is W[1]-hard (notice that this result implies
the NP-hardness of MaxRPSP). For details on parameterized reductions, we refer the
reader to~\cite{ParameterizedComplexity,Niedermeier}.

We prove this result via a parameterized reduction from the
Maximum Clique (MaxClique) problem.
Given an undirected graph $G=(V,E)$, MaxClique asks for a clique
$C\subseteq V$ of maximum size.
Here, we consider the two decision versions of MaxClique
and MaxRPSP, $h$-Clique and $k$-RPSP respectively,
parameterized by the sizes of the respective solutions.
For example, given an undirected graph $G=(V,E)$, the $h$-Clique problem consists of
deciding if there exists a clique $C\subseteq V$ of size $h$.
We recall that $h$-Clique is known to be W[1]-hard~\cite{tcs/DowneyF95}.

First, we start by showing how to construct an instance of $k$-RPSP starting
from an instance of $h$-Clique.
Given an (undirected) graph
$G=(V,E)$ with $n$ vertices $v_1, \ldots, v_n$, we construct the
associated directed acyclic graph $D=(N, A)$ as follows.
The set $N$ of vertices is defined as:
\[
N= \{ \vertex{i}{z} \mid v_i\in V, 1 \leq z \leq h \} \cup
  \{ s, t \}
\]
Informally, $N$ consists of two distinguished vertices $s,t$ and
of $h$ copies $\vertex{i}{1}, \ldots,\vertex{i}{h}$ of every vertex
$v_i$ of $G$.

The set of arcs $A$ is defined as:
\[
A= \{ (\vertex{i}{z}, \vertex{j}{z+1}) \mid \{v_i,v_j\} \in E,
  1 \leq z \leq h-1\} \cup
  \{(s,\vertex{i}{1}), (\vertex{i}{h},t) \mid v_i\in V\}
\]
Informally, we connect every two consecutive copies associated with vertices
that are adjacent in $G$, the source vertex $s$ to all the vertices
$\vertex{i}{1}$, with $1 \le i \le n$, and all the vertices $\vertex{i}{h}$, with
$1 \le i \le n$, to the sink vertex $t$.

The set $R$ of required pairs is defined as:
\[
R=\{ \pair{\vertex{i}{x}}{\vertex{j}{y}} \mid \{v_i,v_j\} \in E, 1\leq x<y\leq
h\}
\]

Informally, for each edge $\{v_i,v_j\}$ of $G$
there is a required pair $\pair{\vertex{i}{x}}{\vertex{j}{y}}$,
$1\leq x<y\leq h$, between every two different copies associated with $v_i$, $v_j$.

By construction, the vertices in $N$ (except for $s$ and $t$)
are partitioned into $h$ \emph{independent sets} $I_z=\{\vertex{i}{z} \mid
 1 \le i \le n \}$, with $1 \le z \le h$, each one containing a copy of
every vertex of $V$.
Moreover, the arcs of $A$ only connect two vertices of consecutive subsets
$I_z$ and $I_{z+1}$, with $1\leq z \le h-1$.
Figure~\ref{fig:graph-construction} presents an example of directed
graph $D$ associated with an undirected graph $G$.

\begin{figure}[t]
\centering
\includegraphics[width=\linewidth]{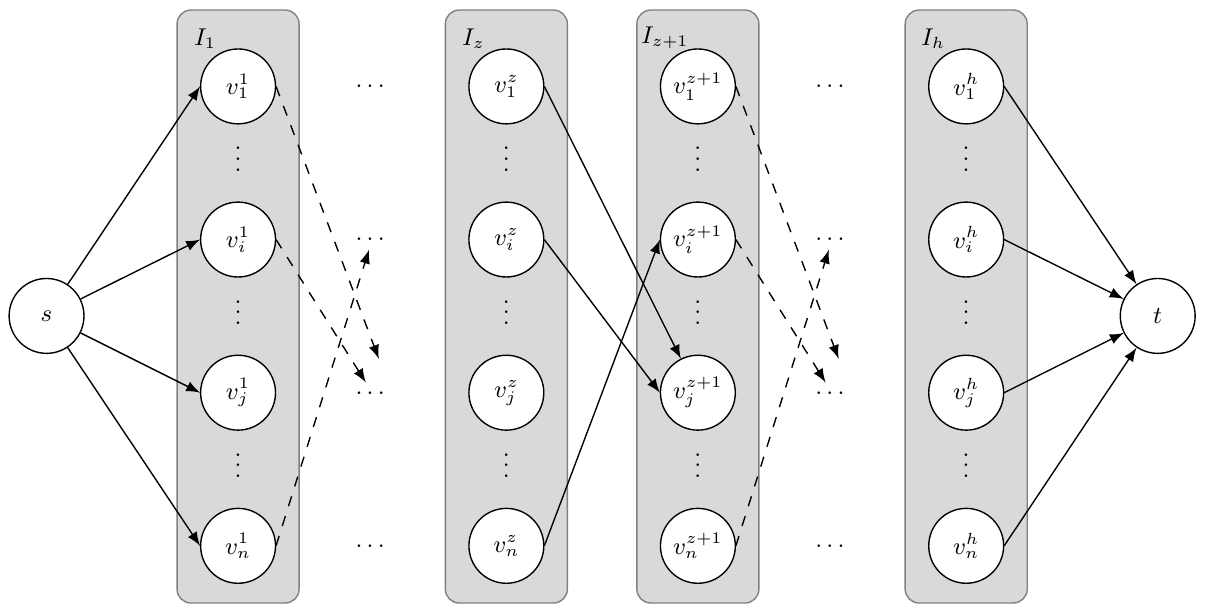}
\caption{Example of directed acyclic graph $D=(N,A)$ associated with
  an instance $G=(V,E)$ of the $h$-Clique problem.
  Each gray box highlight an independent set $I_z$ composed of one copy
  of the vertices in $V$.
  Edges $(\vertex{1}{z}, \vertex{j}{z+1})$, $(\vertex{i}{z},
  \vertex{j}{z+1})$, and $(\vertex{n}{z}, \vertex{i}{z+1})$ are some of
  the directed edges in $A$ associated with edges $\{v_1, v_j\}, \{v_i,
  v_j\}, \{v_i, v_n\} \in E$.}
\label{fig:graph-construction}
\end{figure}

Now, we are able to prove the main properties of the reduction.

\begin{lemma}\label{lem:max-clique}
Let $G=(V,E)$ be an undirected graph and $\langle
D=(N,A), R \rangle$ be the associated instance of $k$-RPSP.
Then:
(1) starting from an $h$-clique in $G$ we can compute in polynomial time
an $st$-path $\pi$ in $D$ that covers $\binom{h}{2}$
required pairs of $R$;
(2) starting from an $st$-path $\pi$ in $D$ that covers $\binom{h}{2}$
required pairs we can compute in polynomial time an $h$-clique in $G$.
\end{lemma}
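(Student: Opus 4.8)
The plan is to prove the two directions separately, exploiting the layered structure of $D$.

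\textbf{Direction (1): from an $h$-clique to a path covering $\binom{h}{2}$ pairs.}
Suppose $C = \{v_{i_1}, \ldots, v_{i_h}\} \subseteq V$ is a clique of size $h$. Since the indices $z$ of the independent sets $I_z$ run from $1$ to $h$, the idea is to use the $z$-th copy of the $z$-th clique vertex: define the path
\[
\pi = \langle s, \vertex{i_1}{1}, \vertex{i_2}{2}, \ldots, \vertex{i_h}{h}, t \rangle .
\]
First I would check $\pi$ is a legal $st$-path: the arc $(\vertex{i_z}{z}, \vertex{i_{z+1}}{z+1})$ exists exactly when $\{v_{i_z}, v_{i_{z+1}}\} \in E$, which holds because $C$ is a clique (and $v_{i_z} \neq v_{i_{z+1}}$ since the $i_z$ are distinct), and the arcs $(s, \vertex{i_1}{1})$ and $(\vertex{i_h}{h}, t)$ always exist. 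Then I would count the covered pairs: for every $1 \le x < y \le h$ we have $v_{i_x} \neq v_{i_y}$ and $\{v_{i_x}, v_{i_y}\} \in E$, so $\pair{\vertex{i_x}{x}}{\vertex{i_y}{y}} \in R$, and both its endpoints lie on $\pi$. These $\binom{h}{2}$ pairs are pairwise distinct since they correspond to distinct $(x,y)$ index pairs, so $\pi$ covers at least (in fact exactly) $\binom{h}{2}$ required pairs.

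\textbf{Direction (2): from a path covering $\binom{h}{2}$ pairs to an $h$-clique.}
Let $\pi$ be an $st$-path covering a set $R' \subseteq R$ with $|R'| = \binom{h}{2}$. The key structural observation is that, because all arcs go from $I_z$ to $I_{z+1}$, any $st$-path visits exactly one vertex of each $I_z$; write $\pi = \langle s, \vertex{a_1}{1}, \vertex{a_2}{2}, \ldots, \vertex{a_h}{h}, t\rangle$. A required pair $\pair{\vertex{i}{x}}{\vertex{j}{y}}$ with $x<y$ is covered by $\pi$ iff $a_x = i$ and $a_y = j$; hence the covered pairs are precisely $\{\pair{\vertex{a_x}{x}}{\vertex{a_y}{y}} : 1 \le x < y \le h,\ \{v_{a_x}, v_{a_y}\} \in E\}$, and there are at most $\binom{h}{2}$ of them. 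For the count to reach $\binom{h}{2}$, we need $\{v_{a_x}, v_{a_y}\} \in E$ for every pair $x<y$; in particular this forces $v_{a_x} \neq v_{a_y}$ whenever $x \neq y$ (a vertex is not adjacent to itself, and a pair of copies of the same $v_i$ is never in $R$ since the definition requires $\{v_i, v_j\} \in E$, which for $v_i = v_j$ would be a self-loop, absent in the simple graph $G$). Therefore $C = \{v_{a_1}, \ldots, v_{a_h}\}$ has exactly $h$ distinct vertices, all pairwise adjacent, i.e.\ an $h$-clique in $G$. Both constructions are clearly computable in polynomial (indeed linear) time.

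\textbf{Main obstacle.}
The substantive point is the "exactly one vertex per layer'' claim and the consequent bijection between covered pairs and adjacent index-pairs $(x,y)$; everything else is bookkeeping. I would make sure to handle the edge cases cleanly — that $G$ is simple so self-pairs never arise, and that the $\binom{h}{2}$ covered pairs in direction (1) are genuinely distinct — since these are exactly what make the counting tight in both directions. One subtlety worth stating explicitly is that a path could in principle revisit the same $v_i$ in two non-consecutive layers while still being a valid $st$-path; the argument above rules this out only under the hypothesis that all $\binom{h}{2}$ pairs are covered, which is precisely where the clique property is extracted.
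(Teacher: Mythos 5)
Your proposal is correct and follows essentially the same route as the paper: the same explicit path construction for direction (1), and for direction (2) the same structural facts (one vertex per layer, at most $\binom{h}{2}$ coverable pairs, equality forcing distinctness and pairwise adjacency). Your counting in (2) is phrased as a direct bijection between covered pairs and adjacent index pairs rather than the paper's ``each duplicated vertex lies in at most $h-2$ pairs'' contradiction, but this is only a cosmetic variation of the same argument.
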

\begin{proof}
(1) Starting from an $h$-clique $C$ in $G$ we show how to compute
a path $\pi$ in $D$ that covers $\binom{h}{2}$ required pairs of $R$.
Let $C=\{v_{i_1},\ldots,v_{i_h}\}$ be a clique of $G$ and let $\langle
v_{i_1},\dots,v_{i_h}\rangle$ be an arbitrary ordering of $C$.
Let $\pi_C=\langle s,\vertex{i_1}{1},\ldots,\vertex{i_h}{h},t\rangle$ be
a sequence of vertices obtained by selecting the vertex
$\vertex{i_z}{z}$ for each independent set $I_z$, with $1 \le z \le h$ (in
addition to vertices $s$ and $t$).
Since $C$ is a clique of $G$, by construction of $D$, every pair of
vertices $(\vertex{i_z}{z}, \vertex{i_{z+1}}{z+1})$ is connected by an
arc, hence $\pi_C$ is an $st$-path of $D$.
Moreover, the $st$-path $\pi_C$ covers exactly $\binom{h}{2}$ required
pairs of $R$ because, by construction of $R$, there exists a
pair between every two copies of vertices which are adjacent in $G$.
More precisely, since the clique $C$ has all the possible edges among
its $h$ vertices, the number of required pairs covered by the $st$-path
$\pi_C$ is $\binom{h}{2}$.

(2) Let $\pi$ be an $st$-path in $D$ that covers a set $R' \subseteq R$
of $\binom{h}{2}$ required pairs, then we show how to compute in
polynomial time an $h$-clique $C$ in $G$.
Notice that, by construction of $D$, the $st$-path $\pi$ must contain
exactly one vertex $\vertex{i}{z}$, $1 \leq i \leq n$ and $1 \le z \le
h$, for each independent set $I_z$ of $D$.
By construction of set $R$, each vertex $\vertex{i}{z}$
of $\pi$ appears in at most $h-1$ required pairs of $R'$.
Hence, the total number of required pairs covered by the $st$-path $\pi$, which
contains exactly $h$ inner vertices $\vertex{i}{z}$, is at most
$\frac{h(h-1)}{2}=\binom{h}{2}$.
Let $C$ be the set $\{v_i \mid \vertex{i}{z} \in \pi\setminus \{s,t\} \}$.
We claim that $C$ is an $h$-clique.
First, we prove that $C$ contains $h$ vertices.
Suppose to the contrary that $C$ has less than $h$ vertices.
Then, there exist two vertices $\vertex{i'}{x}$ and
$\vertex{i''}{y}$ of $\pi$ that correspond to the same vertex $v_i$ of $C$, that is
$i' = i'' = i$.
Since $\pair{\vertex{i}{x}}{\vertex{i}{y}} \notin R$, it follows that
each $\vertex{i}{x}$, $\vertex{i}{y}$ appears in at most $h-2$
required pairs of $R'$.
As a consequence, the total number of required pairs covered by the $st$-path
$\pi$ is strictly less than $\binom{h}{2}$, violating the initial
hypothesis that $\pi$ covers $\binom{h}{2}$ required pairs of $R$.
Hence $C$ contains $h$ vertices.
As all the internal vertices of $\pi$ (\ie all its vertices but $s$
and $t$) represent distinct vertices of $G$, then all the required pairs
covered by $\pi$ represent distinct edges of $G$.
The only undirected graph with $h$ vertices and $\binom{h}{2}$ edges is
the complete graph, hence $C$ is an $h$-clique of $G$.
\qed
\end{proof}

The W[1]-hardness of $k$-RPSP follows easily from Lemma~\ref{lem:max-clique}.

\begin{theorem}
$k$-RPSP is W[1]-hard when parameterized by the number of required pairs
covered by an $st$-path.
\end{theorem}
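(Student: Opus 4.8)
The plan is to observe that Lemma~\ref{lem:max-clique} already delivers everything needed, and that the only remaining task is to check that the construction of $\langle D=(N,A), R\rangle$ from $\langle G, h\rangle$ satisfies the formal requirements of a parameterized reduction. I would begin by recalling that $h$-Clique is W[1]-hard when parameterized by $h$~\cite{tcs/DowneyF95}, and then argue that the map $\langle G, h\rangle \mapsto \langle D, R, k\rangle$ with $k := \binom{h}{2}$ is a parameterized reduction from $h$-Clique to $k$-RPSP.

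There are three properties to verify. First, the construction runs in the appropriate time: the vertex set $N$ has $hn+2$ elements and the arc set $A$ and the required-pair set $R$ have size $O(h^2|E|)$, so $D$ and $R$ can be built in time polynomial in $|G|$ and $h$ (in particular in FPT time, and since $h\le n$ may be assumed, even in polynomial time). Second, the new parameter $k=\binom{h}{2}$ depends only on the old parameter $h$, as required. Third, and this is the heart of the matter, correctness: by Lemma~\ref{lem:max-clique}(1), if $G$ contains an $h$-clique then $D$ admits an $st$-path covering $\binom{h}{2}=k$ required pairs, so a yes-instance of $h$-Clique maps to a yes-instance of $k$-RPSP; conversely, by Lemma~\ref{lem:max-clique}(2), if $D$ admits an $st$-path covering $k=\binom{h}{2}$ required pairs then $G$ contains an $h$-clique, so a no-instance maps to a no-instance. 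Hence $\langle G,h\rangle$ is a yes-instance of $h$-Clique if and only if $\langle D, R, k\rangle$ is a yes-instance of $k$-RPSP.

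Combining these three points, $h$-Clique $\le_{\mathrm{fpt}} k$-RPSP, and therefore $k$-RPSP is W[1]-hard when parameterized by the number of required pairs covered by the $st$-path. I do not expect any genuine obstacle here: the combinatorial content is entirely contained in Lemma~\ref{lem:max-clique}, and the only thing to be careful about is the bookkeeping on the parameter, namely making explicit that $k$ is set to $\binom{h}{2}$ (a function of $h$ alone) rather than being part of the instance, so that the reduction is parameter-preserving in the sense required for W[1]-hardness. I would also remark, as the excerpt already notes, that since W[1]-hardness implies that no polynomial-time algorithm exists unless $\mathrm{P}=\mathrm{NP}$ (via the fact that $h$-Clique is NP-hard for $h$ given in the input), this reduction simultaneously re-establishes the NP-hardness of MaxRPSP.
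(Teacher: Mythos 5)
Your proposal is correct and follows essentially the same route as the paper, which likewise derives the theorem directly from Lemma~\ref{lem:max-clique} together with the W[1]-hardness of $h$-Clique; you merely make explicit the standard bookkeeping (FPT construction time, $k=\binom{h}{2}$ a function of $h$ alone, and the two correctness directions) that the paper leaves implicit. The closing aside about NP-hardness is slightly loosely phrased (it follows because the reduction is also polynomial-time from Clique, not from W[1]-hardness per se), but it does not affect the argument.
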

\begin{proof}
The result follows from Lemma~\ref{lem:max-clique} and from the
W[1]-hardness of $h$-Clique when parameterized by $h$~\cite{tcs/DowneyF95}.
\qed
\end{proof}

\subsection{An FPT Algorithm for MaxRPSP Parameterized by the Maximum
  Overlapping Degree}
\label{sec:maxrpsp:FPTalgo}

In this section we propose a parameterized algorithm for the MaxRPSP
problem, where the parameter is the maximum overlapping degree of the required pairs
in $R$.
For the rest of the section, let $\langle D=(N,A), R \rangle$ be an
instance of the MaxRPSP problem.
A required pair $\pair{u'}{v'} \in R$ is \emph{maximal} if it is not
nested in another required pair $\pair{u''}{v''}$.

For ease of exposition, we fix an order of the required
pairs in $R$ and we represent the $i$-th required pair of the
ordering as $\pair{v^1_i}{v^2_i}$.
Whenever no confusion arises, we will refer to that required pair as
\emph{$i$-pair}.
Intuitively, we want that the order of the required pairs is
``compatible'' with the topological order of the vertices.
More formally, given two distinct required pairs $\pair{v^1_j}{v^2_j}$
and $\pair{v^1_i}{v^2_i}$ with $j < i$, then pair
$\pair{v^1_j}{v^2_j}$ is nested in $\pair{v^1_i}{v^2_i}$ or there does
not exist a path $\pi$ from $s$ to $v^2_j$ that covers both the required
pairs (that is, $\pi$ passes through $v^2_j$ before $v^2_i$).
Clearly, an order that satisfies this condition can be easily computed
from the topological order of the vertices.

%



We present a parameterized algorithm based on dynamic programming for
the MaxRPSP problem when the parameter $p$ is the maximum number of
overlapping required pairs.
In fact, we can decompose a path $\pi$, starting in $s$, ending in a
vertex $v$, and covering $k$ required pairs, into two subpaths: the first
one---$\pi_1$---starts in $s$, ends in a vertex $v'$, and covers
$k_1$ required pairs, while the other one---$\pi_2$---starts in $v'$, ends in
$v$, and covers the remaining $k_2=k-k_1$ required pairs (possibly
using vertices of $\pi_1$).
The key point to define the dynamic programming recurrence is that,
for each required pair $p$, we keep track the set of required pairs 
overlapping $p$ and covered by the path.
To this aim, for each required pair $\pair{v^1_i}{v^2_i}$, we define the
set $\OP(\pair{v^1_i}{v^2_i})$ as the set of vertices $v$ such that $v$
belongs to a required pair that overlaps $\pair{v^1_i}{v^2_i}$ and
such that $v^2_i$ is reachable from $v$.
By a slightly abuse of the notation, we consider that
$\OP(\pair{v^1_i}{v^2_i})$ always contains vertex $v^1_i$.

The recurrence relies on the following observation.
Let $\pi$ be a path covering a set $P$ of required pairs and let
$N(P)$ be the set of vertices belonging to the required pairs in $P$.
Consider two required pairs $\pair{v^1_i}{v^2_i}$ and
$\pair{v^1_j}{v^2_j}$ in $P$, with $j < i$.
Then, either
$\pair{v^1_j}{v^2_j}$ is nested in $\pair{v^1_i}{v^2_i}$ (hence
the fact that $\pi$ covers the pair $\pair{v^1_j}{v^2_j}$ can be checked
by the recurrence looking only at the required pairs that overlap with
$\pair{v^1_i}{v^2_i}$)
or pairs $\pair{v^1_i}{v^2_i}$ and $\pair{v^1_j}{v^2_j}$ are alternated.
In the latter case, since $\pair{v^1_i}{v^2_i}$ is in $P$, we only have
to consider the vertices in the set $N(P) \cap \OP(\pair{v^1_i}{v^2_i})
\cap \OP(\pair{v^1_j}{v^2_j})$.
Moreover, let $p_i$ be the number of required pairs that overlap the
required pair $\pair{v^1_i}{v^2_i}$, then $|\OP(\pair{v^1_i}{v^2_i})|$
is at most $2p_i$.
Hence, the cardinality of set $N(P) \cap \OP(\pair{v^1_i}{v^2_i})
\cap \OP(\pair{v^1_j}{v^2_j})$ is bounded by $2\max(p_i, p_j)$.
Moreover, given two sets $S$ and $S'$ of vertices such that $S \subseteq
\OP(\pair{v^1_i}{v^2_i})$ and $S' \subseteq \OP(\pair{v^1_j}{v^2_j})$, we
say that $S$ is in
\emph{agreement} with $S'$ if $S \cap (\OP(\pair{v^1_i}{v^2_i}) \cap
\OP(\pair{v^1_j}{v^2_j})) = S' \cap (\OP(\pair{v^1_i}{v^2_i}) \cap
\OP(\pair{v^1_j}{v^2_j}))$. Informally, when $S$ and $S'$ are in agreement,
they must contain the
same subset of vertices of $\OP(\pair{v^1_i}{v^2_i}) \cap
\OP(\pair{v^1_j}{v^2_j})$.

Let $P(\pair{v^1_i}{v^2_i}, S)$ denote the maximum number of required pairs
covered by a path $\pi$ ending in vertex $v^2_i$ and such that
the set $S \subseteq \OP(\pair{v^1_i}{v^2_i})$ is covered by $\pi$.
In the following we present the recurrence to compute $P(\pair{v^1_i}{v^2_i}, S)$.
For ease of exposition we only focus on vertices that appear as
second vertices of the required pairs.
In fact, paths that do not end in such vertices are not able to cover new
required pairs.
Furthermore, for simplicity, we consider the source $s$ as
the second vertex of a fictitious required pair (with index $0$)
$\pair{\bot}{s}$ which does not overlap any other required pair.
Such a fictitious required pair does not contribute to the total number
of required pairs covered by the path.

The recurrence is:
\begin{equation}
\label{eq:dyn-prog}
\footnotesize
P(\pair{v^1_i}{v^2_i}, S) =
\max_{\substack{\text{$\pair{v^1_j}{v^2_j}$\! not\! nested\! in\! $\pair{v^1_i}{v^2_i}$\! and\! $j<i$};\\
    S' \text{ in agreement with } S;\\
    \exists \text{ a path from } v^2_j \text{ to } v^2_i\\
    \text{ convering all vertices in } S \setminus S';
  }}
  \hspace{-2.1em}\left\{ P(\pair{v^1_j}{v^2_j}, S') +
    |Ov(\pair{v^1_i}{v^2_i}, S, S')|\right\}
\end{equation}
where
$Ov(\pair{v^1_i}{v^2_i}, S, S') =
\{\pair{v^1_h}{v^2_h} \mid \pair{v^1_h}{v^2_h} \text{ is nested in }
\pair{v^1_i}{v^2_i} \land v^1_h \in S \land v^2_h \in S\setminus S' \}$.
Notice that each required pair is assumed to be nested in itself.

The base case of the recurrence is $P(\pair{\bot}{s}, \varnothing) = 0$.

The correctness of the recurrence derives from the following two lemmas.
\begin{lemma}
\label{lem:rpsp:dp1}
If $P(\pair{v^1_i}{v^2_i}, S) = k$, then there exists a path $\pi$ in $D$ ending
in $v^2_i$, such that every vertex in $S$ belongs to $\pi$ and the number
of required pairs covered by $\pi$ is $k$.
\end{lemma}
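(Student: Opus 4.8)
The plan is to prove Lemma~\ref{lem:rpsp:dp1} by induction on the recurrence, following the structure of the dynamic program. The statement asserts that the value $P(\pair{v^1_i}{v^2_i}, S) = k$ is ``realizable'': there is an actual path witnessing it. The base case is immediate, since $P(\pair{\bot}{s}, \varnothing) = 0$ is realized by the trivial path $\langle s \rangle$, which covers no required pair and whose only ``required vertex set'' $S = \varnothing$ is vacuously contained in it.

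For the inductive step, suppose $P(\pair{v^1_i}{v^2_i}, S) = k$ and the maximum in Recurrence~\eqref{eq:dyn-prog} is attained by some $j < i$, some $S'$ in agreement with $S$, with $\pair{v^1_j}{v^2_j}$ not nested in $\pair{v^1_i}{v^2_i}$, such that $k = P(\pair{v^1_j}{v^2_j}, S') + |Ov(\pair{v^1_i}{v^2_i}, S, S')|$ and there is a path $\rho$ from $v^2_j$ to $v^2_i$ covering all vertices of $S \setminus S'$. By the induction hypothesis, there is a path $\pi'$ ending in $v^2_j$ covering all vertices of $S'$ and covering exactly $P(\pair{v^1_j}{v^2_j}, S')$ required pairs. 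First I would form $\pi$ by concatenating $\pi'$ and $\rho$ at $v^2_j$; since $D$ is a DAG and the ordering of required pairs is compatible with the topological order, this concatenation is a well-defined $s$-to-$v^2_i$ path. Every vertex of $S = S' \cup (S \setminus S')$ lies on $\pi$: the part in $S'$ is in $\pi'$, the part in $S \setminus S'$ is in $\rho$. Then I would count the required pairs covered by $\pi$: these are exactly the pairs covered by $\pi'$, plus the new pairs whose \emph{both} endpoints are reached only after adding $\rho$. I claim the newly covered pairs are precisely those counted by $|Ov(\pair{v^1_i}{v^2_i}, S, S')|$, namely the pairs $\pair{v^1_h}{v^2_h}$ nested in $\pair{v^1_i}{v^2_i}$ with $v^1_h \in S$ and $v^2_h \in S \setminus S'$.

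The main obstacle is justifying that counting. One direction --- that each pair counted by $Ov$ is genuinely covered by $\pi$ and was not already covered by $\pi'$ --- is not too bad: nestedness in $\pair{v^1_i}{v^2_i}$ forces both endpoints to lie ``between'' $v^1_i$ and $v^2_i$, and $v^2_h \in S \setminus S'$ sits on the $\rho$-portion, while $v^1_h \in S$ is on $\pi$; moreover $v^2_h \notin S'$ means $\pi'$ did not reach $v^2_h$ (using that $S'$ is exactly the $\OP(\pair{v^1_j}{v^2_j})$-portion recorded for $\pi'$, so a pair endpoint missing from $S'$ is missing from $\pi'$ — here I need the agreement condition and the fact that the recurrence only tracks $\OP$-relevant vertices). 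The subtler direction is that $\pi$ covers \emph{no other} new required pairs beyond those in $Ov$: a newly covered pair $\pair{v^1_h}{v^2_h}$ must have $v^2_h$ on the $\rho$-segment (otherwise $\pi'$ already covered it). If this pair overlaps $\pair{v^1_i}{v^2_i}$, then $v^1_h \in \OP(\pair{v^1_i}{v^2_i})$, and since $v^1_h$ lies on $\pi$ and is $\OP$-relevant it is recorded, so $v^1_h \in S$ and likewise $v^2_h \in S$, but since $v^2_h$ is new, $v^2_h \in S \setminus S'$; and if the pair is nested in $\pair{v^1_i}{v^2_i}$ we land exactly in the $Ov$ case, while if it is alternated with $\pair{v^1_i}{v^2_i}$ — here one argues, using the index ordering ($h < i$) and the compatibility of the order with topology, that the ``alternated'' configuration cannot arise for a pair newly closed on the $\rho$-segment, or that such a pair would already have been accounted for in the $S$-bookkeeping at index $j$. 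This case analysis on the relative position of $h$ versus $i$ and $j$, and on whether the new pair overlaps $\pair{v^1_i}{v^2_i}$, is where the real work lies; it relies essentially on the compatibility property of the ordering and on the bound $|\OP(\pair{v^1_i}{v^2_i})| \le 2p_i$ to ensure the tracked sets $S$ suffice. Once the count $k = P(\pair{v^1_j}{v^2_j}, S') + |Ov(\pair{v^1_i}{v^2_i}, S, S')|$ is verified to equal the number of pairs covered by $\pi$, the induction closes.
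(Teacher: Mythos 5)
Your construction is the same as the paper's: induct on the index of the required pair, take the maximizing $j$ and $S'$ in Eq.~\eqref{eq:dyn-prog}, obtain $\pi'$ ending in $v^2_j$ by the induction hypothesis, concatenate it with the $v^2_j$-to-$v^2_i$ path covering $S\setminus S'$ guaranteed by the recurrence's side condition, and account for $P(\pair{v^1_j}{v^2_j},S') + |Ov(\pair{v^1_i}{v^2_i},S,S')|$ covered pairs. The divergence is in how you handle the counting. The ``subtler direction'' you single out as the real work---that the concatenated path covers \emph{no} required pairs beyond those in $Ov$---is neither needed nor true in general: the connecting path may pass through vertices outside $S\setminus S'$ and incidentally complete further pairs, so the claim that the new pairs are ``precisely'' those in $Ov$ cannot be verified. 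What the paper proves, and what the correctness of the algorithm uses, is only the lower bound: the concatenation covers the $k_1$ pairs of $\pi'$ plus the $k_2$ pairs of $Ov$, hence at least $k$; the matching upper bound comes from Lemma~\ref{lem:rpsp:dp2}, not from exactness here. So the unfinished case analysis is not a gap in what must be shown, but organizing the proof around that equality is a misdirection.

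Conversely, the direction that \emph{does} need care---that each pair counted in $Ov(\pair{v^1_i}{v^2_i},S,S')$ is genuinely new, \ie not already among the $k_1$ pairs covered by $\pi'$---is supported in your write-up by an appeal the induction hypothesis does not license: the hypothesis only guarantees that every vertex of $S'$ lies on $\pi'$, not that $S'$ records every $\OP$-relevant vertex that $\pi'$ happens to visit, so ``$v^2_h\notin S'$ hence $\pi'$ misses $v^2_h$'' does not follow from the bookkeeping. The clean justification is acyclicity: every vertex of $S\setminus S'$ lies on a path from $v^2_j$ to $v^2_i$, hence is reachable from $v^2_j$; if such a vertex (other than $v^2_j$ itself) also lay on $\pi'$, which ends in $v^2_j$, then $v^2_j$ would be reachable from it, contradicting that $D$ is a DAG. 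With that fix, and with the count read as ``$\pi$ covers at least $k$ pairs,'' your argument closes and coincides with the paper's proof.
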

\begin{proof}
We prove the lemma by induction on the index $i$.
It is easy to see that the base case holds.
Assume that the lemma holds for index values less than $i$, we
prove that the lemma holds for $i$.
Let $P(\pair{v^1_i}{v^2_i}, S) = k$.
By Eq.~\eqref{eq:dyn-prog}, there exists a vertex $v^2_j$ with $j < i$,
such that $P(\pair{v^1_j}{v^2_j}, S') = k_1$ for some set $S'$
in agreement with $S$.
Assume that $|Ov(\pair{v^1_i}{v^2_i}, S, S')|=k_2$, with $k_1+k_2=k$.
By induction hypothesis, since $P(\pair{v^1_j}{v^2_j}, S')=k_1$, there exists a path
$\pi'$ ending in $v^2_j$, convering every vertex in $S'$, and
such that $\pi'$ covers $k_1$ required pairs.
Furthermore, the $k_2$ covered required pairs have at least one
vertex in $S \setminus S'$, hence the vertices of such required pairs belong
to a path $\pi''$ which starts in $v^2_j$ and ends in $v^2_i$ (path
$\pi''$ exists by hypothesis).
But then, the path obtained by the concatenation of $\pi'$ and
$\pi''$ covers $k_1+k_2$ required pairs.
\qed
\end{proof}

\begin{lemma}
\label{lem:rpsp:dp2}
Let $\pi$ be a path in $D$ ending in $v^2_i$ and covering $k$
required pairs.
Let $S$ be the set of all the vertices belonging to required
pairs covered by $\pi$ and overlapping $\pair{v^1_i}{v^2_i}$.
Then $P(\pair{v^1_i}{v^2_i}, S) \ge k$.
\end{lemma}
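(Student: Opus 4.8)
The plan is to prove Lemma~\ref{lem:rpsp:dp2} by induction on the index $i$, mirroring the structure of the proof of Lemma~\ref{lem:rpsp:dp1} but in the reverse direction: given an actual path $\pi$ ending in $v^2_i$, I want to exhibit a decomposition point $v^2_j$ and a set $S'$ that together witness, via the recurrence~\eqref{eq:dyn-prog}, a value at least $k$. The base case is $i=0$: the only path ending in $s=v^2_0$ covers no required pair and $S=\varnothing$, so $P(\pair{\bot}{s},\varnothing)=0=k$.

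For the inductive step, let $\pi$ end in $v^2_i$ and cover $k$ required pairs, and let $S$ be the set of all vertices on required pairs covered by $\pi$ that overlap $\pair{v^1_i}{v^2_i}$ (together with $v^1_i$ by the notational convention). The key is to choose the ``last'' covered pair before $\pair{v^1_i}{v^2_i}$ that is not nested in it. Concretely, among all required pairs $\pair{v^1_j}{v^2_j}$ covered by $\pi$ with $j<i$ that are not nested in $\pair{v^1_i}{v^2_i}$ (the fictitious pair $\pair{\bot}{s}$ always qualifies, so this set is nonempty), pick the one whose second vertex $v^2_j$ occurs last along $\pi$; by the compatibility of the chosen order of required pairs with the topological order, this is well-defined and $v^2_j$ precedes $v^2_i$ on $\pi$. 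Write $\pi=\pi'\pi''$ where $\pi'$ is the prefix of $\pi$ ending at $v^2_j$ and $\pi''$ is the suffix from $v^2_j$ to $v^2_i$. Let $S'$ be the set of vertices on required pairs covered by $\pi'$ that overlap $\pair{v^1_j}{v^2_j}$. Then $S'\subseteq\OP(\pair{v^1_j}{v^2_j})$ and $S\subseteq\OP(\pair{v^1_i}{v^2_i})$, and by induction hypothesis $P(\pair{v^1_j}{v^2_j},S')\ge k_1$, where $k_1$ is the number of required pairs covered by $\pi'$.

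The remaining work is bookkeeping to show the triple $(\pair{v^1_j}{v^2_j},S,S')$ is feasible in the $\max$ of~\eqref{eq:dyn-prog} and that $P(\pair{v^1_j}{v^2_j},S')+|Ov(\pair{v^1_i}{v^2_i},S,S')|\ge k$. I would argue: (i) $S'$ is in agreement with $S$, i.e. $S$ and $S'$ induce the same subset of $\OP(\pair{v^1_i}{v^2_i})\cap\OP(\pair{v^1_j}{v^2_j})$ --- this holds because a vertex in that intersection that lies on a pair covered by $\pi$ either lies on $\pi'$ (hence in $S'$) or on $\pi''$ only; in the latter case, by maximality of $j$, the pair carrying it must be nested in $\pair{v^1_i}{v^2_i}$, and one checks such a vertex cannot also be in $\OP(\pair{v^1_j}{v^2_j})$ without contradicting the choice of $j$ or the definition of overlap; (ii) the subpath $\pi''$ is a path from $v^2_j$ to $v^2_i$ covering all vertices of $S\setminus S'$, since any vertex in $S$ not already on $\pi'$ lies on the suffix $\pi''$; (iii) every required pair covered by $\pi$ but not by $\pi'$ is nested in $\pair{v^1_i}{v^2_i}$ (again by maximality of $j$, since any non-nested such pair with smaller index would have been a better choice than $j$, and a pair with index $\ge i$ covered by a path ending at $v^2_i$ and sharing a vertex is nested by the order convention), has both its vertices in $S$ and at least one vertex in $S\setminus S'$, hence is counted in $Ov(\pair{v^1_i}{v^2_i},S,S')$; therefore $|Ov(\pair{v^1_i}{v^2_i},S,S')|\ge k-k_1$. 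Combining, $P(\pair{v^1_i}{v^2_i},S)\ge k_1+(k-k_1)=k$.

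I expect the main obstacle to be step (i), verifying the agreement condition and that $Ov$ captures \emph{exactly} the pairs covered by $\pi$ but not $\pi'$: one must carefully rule out double-counting (a pair counted both in the recursive term $P(\pair{v^1_j}{v^2_j},S')$ and in $Ov$) and under-counting (a covered pair that is neither covered by $\pi'$ nor nested in $\pair{v^1_i}{v^2_i}$), and both checks rely delicately on the maximality of the chosen index $j$ and on the compatibility of the fixed order of required pairs with the topological order. The definitions of $\OP(\cdot)$ as requiring reachability of $v^2_i$ from the vertex, and of overlap excluding configurations like $\pair{x}{y},\pair{y}{z}$, are exactly what make these case analyses go through, so I would invoke them explicitly at each branch.
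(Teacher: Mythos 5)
Your proof follows essentially the same argument as the paper: induction on $i$, splitting $\pi$ at the rightmost second vertex $v^2_j$ of a covered pair not nested in $\pair{v^1_i}{v^2_i}$, applying the induction hypothesis to the prefix, and charging the remaining covered pairs to the $Ov(\pair{v^1_i}{v^2_i},S,S')$ term of the recurrence after checking agreement and that the suffix covers $S\setminus S'$. The only (harmless) deviation is that you define $S'$ from the prefix $\pi'$ rather than from the whole path $\pi$ as the paper does---which if anything matches the induction hypothesis more directly---and the bookkeeping you flag as delicate is precisely what the paper dismisses with ``by construction.''
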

\begin{proof}
We prove the lemma by induction on the index $i$.
It is easy to see that the base case holds.
Assume that the lemma holds for index values less than $i$, we
prove that the lemma holds for $i$.
Let $\pi$ be a path, ending in $v^2_i$, that covers $k$ required pairs
and let $S$ be the set of vertices that belong to the required pairs covered
by $\pi$ and overlapping $\pair{v^1_i}{v^2_i}$.
We claim that $P(\pair{v^1_i}{v^2_i}, S) \ge k$.
Consider the rightmost vertex $v^2_j$ of $\pi$ such that $v^2_j$ belongs
to a required pair covered by $\pi$ and not nested in the $i$-pair.
Decompose path $\pi$ into two parts: one---$\pi'$---from $s$ to $v^2_j$, and
the other one---$\pi''$---from $v^2_j$ to $v^2_i$.
Let $S'$ be the set of vertices that belong to the required pairs covered
by $\pi$ and overlapping $\pair{v^1_j}{v^2_j}$.
Let $k'$ be the number of required pairs covered by $\pi'$ and $k''$ be
the number of the remaining required pairs covered by $\pi$ (that is,
$k=k'+k''$).
First, notice that $k'' = |Ov(\pair{v^1_i}{v^2_i}, S, S')|$.
By induction hypothesis $P(\pair{v^1_j}{v^2_j}, S')=k_1$ for some $k_1
\ge k'$.
Moreover, by construction, $S'$ is in agreement with $S$ and the subpath
of $\pi$ from $v^2_j$ to $v^2_i$ covers all the vertices in $S \setminus S'$.
As a consequence, by Eq.~\eqref{eq:dyn-prog}, $P(\pair{v^1_i}{v^2_i}, S)$ is at least
$k_1 + k'' \ge k'+k''=k$, which concludes the proof.
%
%
\qed
\end{proof}

Let $p$ be the maximum number of overlapping required pairs in $D$ (that
is, $p=\max_i\{p_i\}$).
It follows
that the number of possible subsets $S$ is bounded by $O(2^p)$. Then,
each entry $P[v^2_i, S]$ requires time $O(2^p n)$ to be computed,
and, since there exist $O(2^p n)$ entries, the recurrence requires time $O(4^p n^2)$.
From Lemma~\ref{lem:rpsp:dp1} and Lemma~\ref{lem:rpsp:dp2}, it follows
that an optimal solution for MaxRPSP
can be obtained by looking for the maximum of the values $P[v^2_i, S]$.
Hence, the overall
time complexity of the algorithm is bounded by $O(4^p n^2)$.

\end{document}